\definecolor{myRed}{rgb}{0.5,0.2,0.2}
\renewcommand\bibsection%
\newtheorem{theorem}{Theorem}[section]
\newtheorem{lemma}[theorem]{Lemma}
\newtheorem{definition}[theorem]{Definition}
\newtheorem{remark}{Remark}[section]
\title{Nonuniform Graph Partitioning \\with Unrelated Weights}
\author{Konstantin Makarychev\\Microsoft Research\and Yury Makarychev\thanks{Supported by NSF CAREER award CCF-1150062 and NSF grant IIS-1302662.}\\Toyota Technological Institute at Chicago}
\date{}
\newcommand{\OpenFrame}{\rule{0pt}{12pt} \hrule height 0.8pt \rule{0pt}{1pt} \hrule height 0.4pt \rule{0pt}{6pt}}
\newcommand{\CloseFrame}{\rule{0pt}{1pt}\hrule height 0.4pt \rule{0pt}{1pt} \hrule height 0.8pt \rule{0pt}{12pt}}
\newcommand {\subsec}[1]{\par{\textbf{#1. }}}
\DeclareMathOperator {\depth}  {depth}
\newcommand {\brc}   [1] {\left(#1\right)}
\newcommand {\Exp}       {\mathbb{E}}
\newcommand {\Prob}  [1] {\Pr \brc{#1 }}
\newcommand{\given}{\mid}
\newcommand{\Given}{\;\mid\;}
\newcommand {\bbR}    {\mathbb{R}}
\newcommand {\calC}   {{\cal{C}}}
\newcommand {\calI}   {{\cal{I}}}
\newcommand {\calP}   {{\cal{P}}}
\newcommand {\calA}   {{\cal{A}}}
\begin{document}
\maketitle
\begin{abstract}
We give a bi-criteria approximation algorithm for the Minimum Nonuniform Partitioning problem, recently introduced by \citet*{KNST}. In this problem,
we are given a graph $G=(V,E)$ on $n$ vertices and $k$ numbers $\rho_1,\dots, \rho_k$. 
The goal is to partition the graph into $k$ disjoint sets $P_1,\dots, P_k$ satisfying 
$|P_i|\leq \rho_i n$ so as to minimize the number of edges cut by the partition.
Our algorithm has an approximation ratio of $O(\sqrt{\log n \log k})$ for general graphs,
and an $O(1)$ approximation for graphs with excluded minors. This is an improvement upon the $O(\log n)$ 
algorithm of~\citet*{KNST}. Our approximation ratio matches
the best known ratio for the Minimum (Uniform) $k$-Partitioning problem. 

We extend our results to the case of ``unrelated weights'' and to the case of ``unrelated $d$-dimensional weights''. 
In the former case, different vertices may have different weights
and the weight of a vertex may depend on the set $P_i$ the vertex is assigned to.
In the latter case, each vertex $u$ has a $d$-dimensional weight 
$r(u,i) = (r_1(u,i), \dots, r_d(u,i))$ if $u$ is assigned to $P_i$. Each set $P_i$ has a $d$-dimensional capacity $c(i) = (c_1(i),\dots, c_d(i))$. The goal is to find a partition such that $\sum_{u\in {P_i}} r(u,i) \leq c(i)$ coordinate-wise.

\end{abstract}

\section{Introduction}
\hyphenation{Krauth-gamer}
We study the Minimum Nonuniform Partitioning problem, which was recently proposed by 
Krauthgamer, Naor, Schwartz and Talwar (2014). 
We are given a graph $G=(V,E)$, parameter $k$ and $k$ numbers (capacities) $\rho_1,\dots, \rho_k$. Our goal
is to partition the graph $G$ into $k$ pieces (bins) $P_1, \dots, P_k$ satisfying capacity constraints $|P_i|\leq \rho_i n$
so as to minimize the number of cut edges.
The problem is a generalization of the Minimum $k$-Partitioning problem studied by
\citet*{KNS}, in which 
all bins have equal capacity $\rho_i = 1/k$.

The problem has many applications (see Krauthgamer et al. \citeyear{KNST}). 
Consider an example in 
cloud computing: Imagine that we need to distribute $n$ computational 
tasks -- vertices of the graph -- among $k$ machines, each with capacity $\rho_i n$. 
Different tasks communicate with each other. 
The amount of communication 
between tasks $u$ and $v$
equals the weight of the edges between the corresponding vertices $u$ and $v$. 
Our goal is to distribute tasks among $k$ machines subject to capacity constraints 
so as to minimize the total amount of communication between machines.\footnote{
In this example, we need to solve
a variant of the problem with edge weights.}

The problem is quite challenging. Krauthgamer et al. (2014) note that many existing techniques do not work for this problem. 
Particularly, it is not clear how to solve this problem on tree graphs\footnote{Our algorithm gives a constant factor bi-criteria approximation for trees.} and consequently
how to use R\"acke's~(\citeyear{Racke}) tree decomposition technique.  Krauthgamer et al. (2014) give an $O(\log n)$ bi-criteria 
approximation algorithm for the problem: the algorithm finds 
a partition $P_1,\dots, P_k$ such that $|P_i| \leq O(\rho_i n)$ for every $i$ and the number of cut edges is $O(\log n \,OPT)$.
The algorithm first solves a configuration linear program and then uses a new
sophisticated method to round the fractional solution.

In this paper, we present a rather simple SDP based $O(\sqrt{\log n \log k})$ bi-criteria approximation algorithm for the problem. 
We note that our approximation guarantee matches that of the algorithm of
\citet*{KNS} for the the Minimum $k$-Partitioning problem (which is a special case of Minimum Nonuniform Partitioning, 
see above).
Our algorithm uses a technique of ``orthogonal separators'' developed by~\citet*{CMM} and later used by 
Bansal, Feige, Krauthgamer, Makarychev, Nagarajan, Naor, and Schwartz~(\citeyear{BFK}) for the Small Set Expansion problem. 
Using orthogonal separators, it is relatively easy to get a distribution over 
partitions $\{P_1,\dots, P_k\}$ such that $\Exp [|P_i|]\leq O(\rho_i n)$ for all $i$ and the expected number of
cut edges is $O(\sqrt{\log n \log (1/\rho_{min})}\,OPT)$ where $\rho_{min} = \min_i \rho_i$. The problem is that for some $i$, $P_i$ may be much larger than its expected size. 
The algorithm of Krauthgamer et al. (2014) solves a similar problem by first simplifying the instance and then grouping parts $P_i$ into ``mega-buckets''. We propose a simpler fix: Roughly speaking, 
if a set $P_i$ contains too many vertices, we remove some of these vertices and re-partition the removed vertices into $k$ pieces again. Thus we ensure that all capacity constraints
are (approximately) satisfied. It turns out that every vertex gets removed a constant number of times in expectation. Hence,
the re-partitioning step increases the number of cut edges 
only by a constant factor. 
Another problem is that $1/\rho_{min}$ may
be much larger than $k$. To deal with this problem, we transform the SDP solution
(eliminating ``short'' vectors)
and redefine thresholds $\rho_i$ so that
$1/\rho_{min}$ becomes $O(k)$.  

Our technique is quite robust and allows us to solve more general versions of the problem, \textit{Nonuniform Graph Partitioning with unrelated weights} and \textit{Nonuniform Graph Partitioning with unrelated $d$-dimensional weights}. 

Minimum Nonuniform Graph Partitioning with unrelated weights captures the variant of the problem where we assign vertices (tasks/jobs) to unrelated machines and the weight
of a vertex (the size of the task/job) depends on the machine it is assigned to.
\begin{definition}[Minimum Nonuniform Graph Partitioning with unrelated weights]
We are given a graph $G = (V, E)$ on $n$ vertices and a natural number $k \geq 2$. Additionally, we are given $k$ normalized
measures $\mu_1,\dots, \mu_k$ on $V$ (satisfying $\mu_i(V)=1$) and $k$ numbers $\rho_1, \dots, \rho_k \in (0,1)$. 
Our goal is to partition the graph into $k$ pieces (bins) $P_1$, \dots, $P_k$ such that $\mu_i(P_i) \leq \rho_i$ so as to minimize the number of cut edges. 
Some pieces $P_i$ may be empty.
\end{definition}

We will only consider instances of Minimum Nonuniform Graph Partitioning that have a feasible solution. 
We give an $O_{\varepsilon}(\sqrt{\log n \log \min(1/\rho_{min}, k)})$ bi-criteria approximation algorithm for the problem.
\begin{theorem}\label{thm:intro}
For every $\varepsilon >0$, there exists a randomized polynomial-time algorithm that given an instance of
Minimum Nonuniform Graph Partitioning with unrelated weights finds a partition $P_1,\dots, P_k$ satisfying $\mu_i(P_i)\leq 5(1+\varepsilon)\rho_i$.
The expected cost of the solution is at most $D\times OPT$, where $OPT$ is the optimal value, 
$D=O_{\varepsilon}(\sqrt{\log n \log \min(1/\rho_{min}, k)})$ and $\rho_{min}=\min_i \rho_i$.
For graphs with excluded minors $D=O_{\varepsilon}(1)$.
\end{theorem}

Nonuniform Graph Partitioning with unrelated $d$-dimensional weights
further generalizes the problem. In this variant of the problem,
we assume that we have $d$ resources 
(e.g. CPU speed, random access memory, disk space, network). Each piece $P_i$ has $c_j(i)$ units of resource $j\in\{1,\dots, d\}$, 
and each vertex $u$ requires $r_j(u,i)$ units of resource $j\in\{1,\dots, d\}$ when it is assigned to piece $P_i$.
We need to partition the graph so that capacity constraints for all resources are satisfied. 
The $d$-dimensional version of Minimum (uniform) $k$-Partitioning was previously studied by~\cite{AFK}. In their problem,
all $\rho_i=1/k$ are the same, and $r_j$'s do not depend on $i$.
\begin{definition}[Minimum Nonuniform Graph Partitioning with  unrelated $d$-dimensional weights]
We are given a graph $G = (V, E)$ on $n$ vertices. Additionally, we are given non-negative numbers $c_j(i)$ and $r_j(u,i)$
for $i\in \{1,\dots, k\}$, $j \in \{1,\dots,d\}$, $u \in V$.
Our goal is to find a partition of $V$ into $P_1,\dots, P_k$ subject to capacity constraints
$\sum_{u\in V} r_j(u,i) \leq c_j(i) \text{ for every } i \text{ and } j$
so as to minimize the number of cut edges.
\end{definition}
We present a bi-criteria approximation algorithm for this problem. 
\begin{theorem}\label{thm:intro-multi}
For every $\varepsilon >0$, there exists a randomized polynomial-time algorithm that given an instance of
Minimum Nonuniform Graph Partitioning with unrelated $d$-dimensional weights finds a partition $P_1,\dots, P_k$ satisfying 
$$\sum_{v\in V} r_j(v,i) \leq 5d(1 + \varepsilon)c_j(i) \quad\text{for every } i \text{ and } j.$$
The expected cost of the solution is at most $D\times OPT$, where $OPT$ is the optimal value, 
$D=O_{\varepsilon}(\sqrt{\log n \log k})$.
For graphs with excluded minors $D=O_{\varepsilon}(1)$.
\end{theorem}
We note that this result is a simple corollary
of Theorem~\ref{thm:intro} we let $\mu'_i (u) = \max_j (r_j(u,i)/c_j(i))$
and then apply our result to measures  $\mu_i(u) = \mu'_i(u)/\mu'_i(V)$ (we describe the details in Appendix~\ref{sec:multiple-resources}).

We remark that our algorithms work if edges in the graph have arbitrary positive weights. However, for simplicity of exposition,
we describe the algorithms for the setting where all edge weights are equal to one. To deal with arbitrary edge weights, we 
only need to change the SDP objective function.

Our paper strengthens the result of Krauthgamer et al. (\citeyear{KNST})
in two ways. First, it improves the approximation factor from $O(\log n)$ to $O(\sqrt{\log n \log k})$.
Second, it studies considerably more general 
variants of the problem, 
Minimum Nonuniform Partitioning with unrelated weights and
Minimum Nonuniform Partitioning with unrelated $d$-dimensional weights.
We believe that these variants are very natural. 
Indeed, one of the main motivations for the Minimum Nonuniform Partitioning problem is its applications to scheduling and load balancing: 
in these applications, the goal is to assign tasks to machines so as to minimize the total amount of communication between different machines, 
subject to capacity constraints. 
The constraints that we study in the paper are very general
and analogous to those that are often considered in the scheduling literature. 
We note that the method developed in  Krauthgamer et al. (2014) does not handle these more general variants of the problem. 
 
\section{Algorithm}
\subsec{SDP Relaxation}
Our relaxation for the problem is based on the SDP relaxation for the Small Set Expansion (SSE) problem
of \citet*{BFK}. We write the SSE relaxation for 
every cluster $P_i$ and then add consistency constraints similar to 
constraints used in Unique Games. 
For every vertex $u$ and index $i\in\{1,\dots, k\}$, we introduce a vector $\bar{u}_i$.
In the integral solution, this vector is simply the indicator variable for the event ``$u\in P_i$''.
It is easy to see that in the integral case, the number of cut edges equals~(\ref{sdp:objectiveA}).
Indeed, if $u$ and $v$ lie in the same $P_j$, then $\bar{u}_i = \bar{v}_i$ for all $i$; 
if $u$ lies in $P_{j'}$ and $v$ lies in $P_{j''}$ (for $j'\neq j''$) 
then $\|\bar u_i - \bar v_i\|^2 = 1$ for $i\in\{j',j''\}$ and 
$\|\bar u_i - \bar v_i\|^2 = 0$ for $i\notin\{j',j''\}$. The SDP objective is to minimize~(\ref{sdp:objectiveA}).

We add constraint (\ref{sdp:sum-u-in-P}) saying that $\mu_i(P_i)\leq \rho_i$. 
We further add spreading constraints (\ref{sdp:spread}) from \citet*{BFK} (see also \citet*{LM14}). 
The spreading constraints above are satisfied in the integral solution: If $u\notin P_i$,
then $\bar u_i= 0$ and both sides of the inequality equal 0. If $u\in P_i$,
then the left hand side equals $\mu_i(P_i)$, and the right hand side 
equals $\rho_i$. 

We write standard $\ell_2^2$-triangle inequalities (\ref{sdp:tr-ineq-1}) and (\ref{sdp:tr-ineq-2}).
Finally, we add consistency constraints. Every vertex $u$ must be assigned to one and only one $P_i$, hence
constraint (\ref{sdp:equal1}) is satisfied.
We obtain the following SDP relaxation.


\OpenFrame

\noindent \textbf{SDP Relaxation}
\begin{equation}\label{sdp:objectiveA}
\min \frac{1}{2}\sum_{i=1}^k \sum_{(u,v)\in E} \|\bar u_i - \bar v_i\|^2
\end{equation}
\noindent \textbf{subject to}
\begin{align}
\sum_{u\in V} \|\bar u_i\|^2 \mu_i(u) &\leq \rho_i \label{sdp:sum-u-in-P}&
\text{for all } i\in[k]\\
\sum_{v\in V} \langle \bar u_i, \bar v_i \rangle \mu_i(v)&\leq \|\bar u_i\|^2 \rho_i \label{sdp:size}\\
\text{for all } u\in V,\,i\in [k]\label{sdp:spread}\\
\sum_{i=1}^k \|\bar u_i\|^2 &= 1&\text{for all } u\in V\label{sdp:equal1}\\
\|\bar u_i-\bar v_i\|^2 + \|\bar v_i-\bar w_i\|^2
&\geq \|\bar u_i-\bar w_i\|^2&\text{for all } u,v,w\in V,\;i\in [k] \label{sdp:tr-ineq-1}\\
0\leq \langle \bar u_i, \bar v_i\rangle &\leq \|\bar u_i\|^2&\text{for all } u,v\in V,\;i\in [k]\label{sdp:tr-ineq-2} 
\end{align}


\CloseFrame

\subsec{Small Set Expansion and Orthogonal Separators}\label{sec:orthogonal}
Our algorithm uses a technique called ``orthogonal separators''. The notion of 
orthogonal separators was introduced in \citet*{CMM},
where it was used in an algorithm for Unique Games. Later,
\citet*{BFK} showed that the following holds. If the SDP solution satisfies constraints~(\ref{sdp:size}), (\ref{sdp:spread}), (\ref{sdp:tr-ineq-1}), and (\ref{sdp:tr-ineq-2}), then for every $\varepsilon\in (0,1)$,  $\delta\in (0,1)$, and $i\in [k]$, there exist
a distortion $D_i=O_{\varepsilon}(\sqrt{\log n \log (1/(\delta \rho_i))})$,
and a probability distribution over subsets of $V$ such that for a random set $S_i\subset V$ (``orthogonal separator'') distributed according to this distribution, we have  for 
$\alpha = 1/n$, 
\begin{itemize}
\item $\mu_i(S_i)\leq (1+\varepsilon)\rho_i$ (always);
\item For all $u$, 
$\Pr(u\in S_i)\in [(1-\delta) \alpha \|\bar u_i\|^2, \alpha \|\bar u_i\|^2]$;
\item For all $(u,v)\in E$, 
$\Pr(u\in S_i, v\notin S_i) \leq \alpha D_i \cdot \|\bar u_i- \bar v_i\|^2$.
\end{itemize}
We let $D = \max_{i} D_i$.
This statement was proved in \citet*{BFK} implicitly, so for completeness we prove it in the Appendix --- see Theorem~\ref{thm:orth-sep}.
For graphs with excluded minors and bounded genus graphs, $D =O_{\varepsilon}(1)$.

\subsec{Algorithm}
Let us examine a somewhat na{\"\i}ve algorithm for the problem inspired by the algorithm of \citet*{BFK} for  Small 
Set Expansion. We shall maintain the set of active (yet unassigned) vertices $A(t)$.
Initially, all vertices are active, i.e. $A(0)=V$. At every step $t$, we pick a random index $i \in \{1,\dots, k\}$ and sample 
an orthogonal separator $S_{i}(t)$ as described above. 
We assign all active vertices from 
$S_i(t)$ to the bin number $i$:
$$P_{i}(t+1)=P_{i}(t)\cup (S_{i}(t)\cap A(t)),$$
and mark all newly assigned vertices as inactive i.e., we let $A(t+1) = A(t)\setminus S_{i}(t)$.
We stop when the set of active vertices $A(t)$ is empty. We output the partition 
$\calP=\{P_1(T), \dots , P_k(T)\}$, where $T$ is the index of the  last iteration.

We can show that the number of edges cut by the algorithm is at most $O(D\times OPT)$, where $D$ is the distortion of orthogonal 
separators. Furthermore, the expected weight of each $P_i$ is $O(\rho_i)$. However, weights of some pieces may significantly deviate from
the expectation and may be much larger than $\rho_i$. So we need to alter the algorithm to guarantee that all sizes are bounded by $O(\rho_i)$
simultaneously. We face a problem similar to the one  \citet*{KNST} had to solve in their paper. Their solution is rather complex and does not seem to work in the weighted case. Here, we propose
a very simple fix for the na{\"\i}ve algorithm we presented above. We shall store vertices in every 
bin in layers. When we add new vertices to a bin at some iteration, we put them in a new layer on top of already stored vertices. Now, if
the weight of the bin number $i$ is greater than $5(1+\varepsilon) \rho_i$, we remove bottom layers from this bin so that its weight
is at most $5(1+\varepsilon) \rho_i$. Then we mark the removed vertices as active and jump to the next iteration. It is 
clear that this algorithm always returns a solution satisfying $\mu_i(P_i)\leq 5(1+\varepsilon) \rho_i$ for all $i$.
But now we need to prove that the algorithm terminates, and that the expected number of cut edges is still bounded 
by $O(D\times OPT)$. 

\tikzset{
    setbox/.style={
           rectangle,
           rounded corners=1pt,
           draw=darkgray, thin,
           minimum width=1.76cm,
           inner sep=0pt},
    limitline/.style={
           rectangle,
           minimum height=0pt,                      
           minimum width=1.76cm,
           inner sep=0pt,
           draw=blue!50!black,very thick}
           }

\begin{figure}[tb]
\scalebox{0.85}{
\begin{tikzpicture}
 \def\tmargin{0.3}
 \def\tspaceX{0.225}
 \def\tspaceY{0.1}

 \draw[black] (0,0)--(9,0);
 \draw[->,black] (0,0)--(0,4);
 \node[below left,text width=2cm,align=right] at (9,4) {\small
 partitioned vertices};
 
 \node[setbox,minimum height=12,above right] (a1) at (\tmargin,\tmargin) {};
 \node[setbox,minimum height=5,above=\tspaceY of a1] (a2) {};
 \node[setbox,minimum height=8,above=\tspaceY of a2] (a3) {};
 \node[setbox,minimum height=10,above=\tspaceY of a3] (a4) {};

 \node[below=\tmargin+\tspaceY of a1] (P1) {$P_1(t)$};
 \node[limitline,above=6*\tspaceY of a4,color=blue!50!black] (alimit) {};
 \node[above= 0 of alimit] {\footnotesize{$5(1+\varepsilon) \rho_1$}};
 
 \node[setbox,minimum height=3,right=\tspaceX of a1.south east, anchor=south west] (b1){};
 \node[setbox,minimum height=5,above=\tspaceY of b1] (b2) {};
 \node[setbox,minimum height=4,above=\tspaceY of b2] (b3) {};
 \node[setbox,minimum height=3,above=\tspaceY of b3] (b4) {};
 \node[setbox,minimum height=4,above=\tspaceY of b4] (b5) {};

 \node[below=\tmargin+\tspaceY of b1] (P2) {$P_2(t)$};
 \node[limitline,above=2.5*\tspaceY of b5] (blimit) {};
 \node[above= 0 of blimit] {\footnotesize{$5(1+\varepsilon) \rho_2$}};

 \node[setbox,minimum height=10,right=3*\tspaceX of b1.south east, anchor=south west,black!40!blue] (c1){};
 \node[setbox,minimum height=12,above=\tspaceY of c1,black!40!blue] (c2) {};
 \node[setbox,minimum height=6,above=\tspaceY of c2] (c3) {}; 
 \node[setbox,minimum height=8,above=\tspaceY of c3] (c4) {};
 \node[setbox,minimum height=15,thick,black,above=15*\tspaceY of c4,inner sep = 2pt] (cT) {\small $S_i(t)\cap A(t)$};

 \node[below=\tmargin+\tspaceY of c1] (Pi) {$P_i(t)$};
 \node[limitline,above=3*\tspaceY of c4] (climit) {};
 \node[above=0 of climit] (cTtext){\footnotesize{$5(1+\varepsilon) \rho_i$}};
 \node[below= 3pt of cT.south](cTsouthAnchor) {};
 \draw[->,black] (cTsouthAnchor.north)--(cTtext.north); 
 
   \node[setbox,minimum height=12,below=\tspaceY of Pi,opacity=0] (c2phantom) {\footnotesize{reactivated}}; 
   \node[setbox,minimum height=10,below=\tspaceY of c2phantom,opacity=0] (c1phantom) {\footnotesize{reactivated}};

 \node[setbox,minimum height=6,right=3*\tspaceX of c1.south east, anchor=south west] (d1){};
 \node[setbox,minimum height=7,above=\tspaceY of d1] (d2){};
 \node[setbox,minimum height=4,above=\tspaceY of d2] (d3) {};
 \node[setbox,minimum height=6,above=\tspaceY of d3] (d4) {}; 
 \node[setbox,minimum height=5,above=\tspaceY of d4] (d5) {};
 
 \node[below=\tmargin+\tspaceY of d1] (Pk) {$P_k(t)$};
 \node[limitline,above=1.2*\tspaceY of d5] (dlimit) {};
 \node[above= 0 of dlimit] {\footnotesize{$5(1+\varepsilon) \rho_k$}};

 \node at ($(P2)!0.5!(Pi)$) {$\dots$}; 
 \node at ($(Pi)!0.5!(Pk)$) {$\dots$}; 
  
\end{tikzpicture}
}
\hspace{0.45cm}
\scalebox{0.85}{
\begin{tikzpicture}
 \def\tmargin{0.3}
 \def\tspaceX{0.225}
 \def\tspaceY{0.1}

 \draw[black] (0,0)--(9,0);
 \draw[->,black] (0,0)--(0,4);

 \node[below left,text width=2cm,align=right] at (9,4) {\small
 partitioned vertices};
 
 \node[setbox,minimum height=12,above right] (a1) at (\tmargin,\tmargin) {};
 \node[setbox,minimum height=5,above=\tspaceY of a1] (a2) {};
 \node[setbox,minimum height=8,above=\tspaceY of a2] (a3) {};
 \node[setbox,minimum height=10,above=\tspaceY of a3] (a4) {};

 \node[below=\tmargin+\tspaceY of a1] (P1) {$P_1(t+1)$};
 \node[limitline,above=6*\tspaceY of a4] (alimit) {};
 \node[above= 0 of alimit] {\footnotesize{$5(1+\varepsilon) \rho_1$}};
 
 \node[setbox,minimum height=3,right=\tspaceX of a1.south east, anchor=south west] (b1){};
 \node[setbox,minimum height=5,above=\tspaceY of b1] (b2) {};
 \node[setbox,minimum height=4,above=\tspaceY of b2] (b3) {};
 \node[setbox,minimum height=3,above=\tspaceY of b3] (b4) {};
 \node[setbox,minimum height=4,above=\tspaceY of b4] (b5) {};

 \node[below=\tmargin+\tspaceY of b1] (P2) {$P_2(t+1)$};
 \node[limitline,above=2.5*\tspaceY of b5] (blimit) {};
 \node[above= 0 of blimit] {\footnotesize{$5(1+\varepsilon) \rho_2$}};
 
 \node[setbox,minimum height=6,right=3*\tspaceX of b1.south east, anchor = south west] (c3) {}; 
 \node[setbox,minimum height=8,above=\tspaceY of c3] (c4) {};
 \node[setbox,minimum height=15,thick,black,above=\tspaceY of c4,inner sep = 2pt] (cT) {\small $S_i(t)\cap A(t)$};
 
 \node[setbox,minimum height=10,above=\tspaceY of c4,opacity=0] (c1phantom) {};
 \node[setbox,minimum height=12,above=\tspaceY of c1phantom,opacity=0] (c2phantom) {};
  
 \node[below=\tmargin+\tspaceY of c3] (Pi) {$P_i(t+1)$};
 
  \node[setbox,minimum height=12,below=\tspaceY of Pi,black!40!blue] (c2){\color{black}\footnotesize{reactivated}}; 
  \node[setbox,minimum height=10,below=\tspaceY of c2,black!40!blue] (c1) {\color{black}\footnotesize{reactivated}};

 \node[limitline,above=3*\tspaceY of c2phantom] (climit) {};
 \node[above=0 of climit] (cTtext){\footnotesize{$5(1+\varepsilon) \rho_i$}};
 
 \node[setbox,minimum height=6,right=3*\tspaceX of c3.south east, anchor=south west] (d1){};
 \node[setbox,minimum height=7,above=\tspaceY of d1] (d2){};
 \node[setbox,minimum height=4,above=\tspaceY of d2] (d3) {};
 \node[setbox,minimum height=6,above=\tspaceY of d3] (d4) {}; 
 \node[setbox,minimum height=5,above=\tspaceY of d4] (d5) {};
 
 \node[below=\tmargin+\tspaceY of d1] (Pk) {$P_k(t+1)$};
 \node[limitline,above=1.2*\tspaceY of d5] (dlimit) {};
 \node[above= 0 of dlimit] {\footnotesize{$5(1+\varepsilon) \rho_k$}};

 \node at ($(P2)!0.5!(Pi)$) {$\dots$}; 
 \node at ($(Pi)!0.5!(Pk)$) {$\dots$}; 
  
\end{tikzpicture}
}
\caption{
The figure shows how we update sets $P_i(t)$ in iteration $t$. 
In this figure, rectangles represent layers of vertices in sets $P_i(t)$ (on the left) and $P_i(t+1)$ (on the right). All vertices 
in these layers are inactive (they are already  partitioned). Blue horizontal lines show capacity constraints.
In the example shown in the figure,  we add set $S_i(t) \cap A(t)$ to $P_i(t)$. The measure of the obtained set is greater than
$5(1+\varepsilon) \rho_i$, and so we remove the two bottom layers from $P_i(t) \cup (S_i(t) \cap A(t))$ 
(the removed layers are shown in blue). We get a set of measure at most $5(1+\varepsilon) \rho_i$. Vertices in the removed layers are reactivated after the iteration is over.
}
\label{fig:oneiteration}
\end{figure}
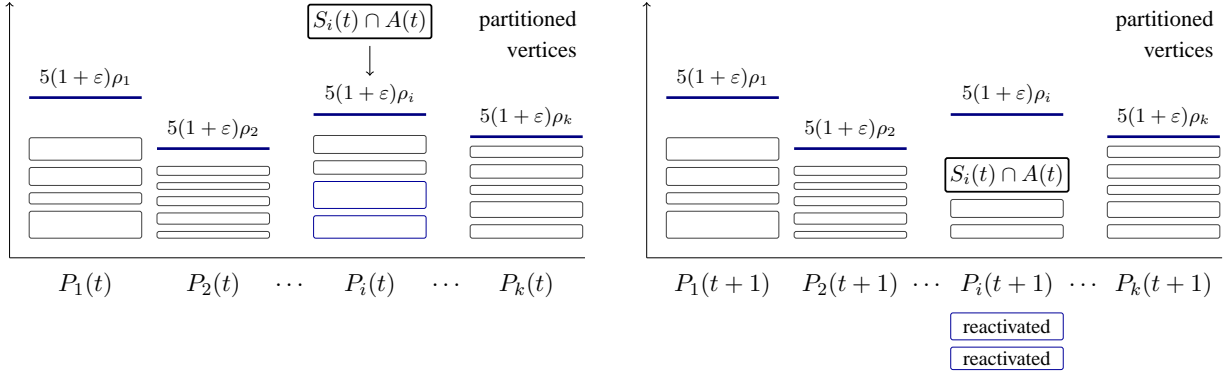

Before proceeding to the analysis, we describe the algorithm in detail.

\OpenFrame


\noindent \textbf{Algorithm for Nonuniform Partitioning with Unrelated Weights}

\vskip 1ex

\noindent \textbf{Input: }a graph $G=(V,E)$ on $n$ vertices; a positive integer $k\leq n$; a sequence of numbers $\rho_1,\dots, \rho_k\in (0,1)$ (with 
$\rho_1+\dots +\rho_k\geq 1$); weights $\mu_i:V\to \bbR^+$ (with $\mu_i(V)=1$).

\vskip 1ex

\noindent \textbf{Output: }a partitioning of vertices into disjoint sets $P_1,\dots, P_k$ such that $\mu_i(P_i)\leq 5(1+\varepsilon) \rho_i$.
\begin{itemize}
\item The algorithm maintains a partitioning of $V$ into a set of active vertices $A(t)$ and $k$ sets $P_1(t),\dots P_k(t)$, which we call bins.
For every inactive vertex $u\notin A(t)$, we remember its depth in the bin it belongs to. We denote the depth by $\depth_u (t)$. If $u\in A(t)$, then we let $\depth_u(t)=\perp$.
\item Initially, set $A(0) = V$; and $P_i(0)=\varnothing$, $\depth_u(t)=\perp$ for all $i$; $t = 0$.
\item \textbf{while} $A(t)\neq \varnothing$ 
\begin{enumerate}
\item Pick an index $i\in \{1,\dots, k\}$ uniformly at random.
\item Sample an orthogonal separator $S_{i}(t)\subset V$ with $\delta = \varepsilon/4$ as described in Section~\ref{sec:orthogonal}.
\item Store all active vertices from the set $S_i(t)$ in the bin number $i$. If $\mu_i(P_i(t) \cup (S_{i}(t)\cap A(t)))\leq 5(1+\varepsilon)\rho_i$, then 
simply add these vertices to $P_i(t+1)$:
$$P_i(t+1) = P_i(t) \cup (S_{i}(t)\cap A(t)).$$
Otherwise, find the largest depth $d$ such that $\mu_i(P_i(t+1))\leq 5(1+\varepsilon)\rho_i$, where
$$P_i(t+1) = \{u\in P_i(t): \depth_u(t)\leq d\} \cup (S_{i}(t)\cap A(t)).$$
In other words, add to the bin number $i$ vertices from $S_i(t)\cap A(t)$ and remove vertices from the bottom layers so that 
the weight of the bin is at most $5(1+\varepsilon)\rho_i$.
\item If we put at least one new vertex in the bin $i$ at the current iteration, that is, if $A(t)\cap S_i(t)\neq \varnothing$, then
set the depth of all newly stored vertices to 1; increase the depth of all other vertices in the bin $i$ by 1.
\item Update the set of active vertices: let $A(t+1)=V\setminus \bigcup_j P_j(t+1)$ and 
$\depth_u(t+1) = \perp$ for $u\in A(t+1)$. Let $t=t+1$.
\end{enumerate}
\item Set $T=t$ and return the partitioning $P_1(T),\dots, P_k(T)$.
\end{itemize}
\CloseFrame

\noindent Note that Step 3 is well defined. We can always find an index $d$ such that 
$\mu_i(P_i(t+1))\leq 5(1+\varepsilon)\rho_i$, because for $d=0$, we have
$P_i(t+1)= S_i(t) \cap A(t)$ and
thus
$$
\mu(P_i(t+1)) = 
\mu_i(S_i(t)\cap A(t))\leq \mu_i(S_i(t)) \leq (1+\varepsilon)\rho_i < 5(1+\varepsilon)\rho_i,$$
by the first property of orthogonal separators.

\subsec{Analysis}
We will first prove Theorem~\ref{thm:main} that states that the algorithm has approximation factor
$D=O_{\varepsilon}(\sqrt{\log n \log (1/\rho_{min})})$
on arbitrary graphs, and $D=O_{\varepsilon}(1)$ on graphs excluding a minor. Then we will show how to obtain 
$D = O_{\varepsilon}(\sqrt{\log n \log k})$ approximation on arbitrary graphs (see Appendix~\ref{sec:logk}). To this end, we will transform the SDP solution and redefine 
measures $\mu_i$ and capacities $\rho_i$ so that $\rho_{min} \geq \delta/k$, then apply Theorem~\ref{thm:main}.
The new SDP solution will satisfy all SDP constraints except possibly for constraint~(\ref{sdp:equal1}); 
it will however satisfy a relaxed constraint
\begin{equation}
\sum_{i=1}^k \|\bar u_i\|^2 \in [1-\delta,1]
\qquad\text{for all } u\in V.
\label{sdp:equal1approx}\tag{\ref{sdp:equal1}$'$}
\end{equation}
Thus in Theorem~\ref{thm:main}, we will assume only that the solution satisfies the SDP relaxation with constraint (\ref{sdp:equal1}) replaced by constraint (\ref{sdp:equal1approx}).

\begin{theorem}\label{thm:main}
The algorithm returns a partitioning $P_1(T),\dots, P_k(T)$ satisfying $\mu_i(P_i)\leq 5(1+\varepsilon)\rho_i$.
The expected number of iterations of the algorithm is at most $\Exp[T] \leq 4n^2 k + 1$ and the expected 
number of cut edges is at most $O(D\times SDP) = O(D\times OPT)$, where $D=O_{\varepsilon}(\sqrt{\log n \log (1/\rho_{min})})$
is the distortion of orthogonal separators; $\rho_{min}=\min_i \rho_i$. If the graph has an excluded minor, then $D=O_{\varepsilon}(1)$ (the constant depends on the excluded minor).

We assume only that the SDP solution given to the algorithm satisfies the SDP relaxation with constraint (\ref{sdp:equal1}) replaced by constraint (\ref{sdp:equal1approx}).
\end{theorem}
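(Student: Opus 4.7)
The three claims of Theorem~\ref{thm:main} decouple, so I would prove them in order. The size bound $\mu_i(P_i)\le 5(1+\varepsilon)\rho_i$ is guaranteed by construction: Step~3 of the algorithm enforces it as an invariant, and the remark following the algorithm already verifies that such a trimming depth $d$ always exists because $\mu_i(S_i(t)\cap A(t))\le(1+\varepsilon)\rho_i$ by the first property of orthogonal separators. So the first claim of the theorem requires nothing further.

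For the cut bound, I would adapt the standard orthogonal-separator rounding analysis (as in \citet*{CMM} and \citet*{BFK}). The structural fact that makes this work despite the kick-outs is the following ``same layer, together forever'' invariant: whenever two vertices $u$ and $v$ are simultaneously placed in a bin via $S_i(t)\cap A(t)$, they enter the same layer, and because Step~3 removes whole layers from the bottom they either stay together in that bin forever or return to the active set together; in particular they cannot end up in different bins once in this state. So the pair $(u,v)$ is cut in the final output only if at some iteration in which both endpoints are active exactly one of them lies in $S_i(t)$. Since $S_i(t)$ is sampled i.i.d.\ across iterations (independent of $A(t)$), a conditioning argument on the first non-trivial interaction (together with the three orthogonal-separator properties and (\ref{sdp:equal1approx})) yields
\[
\Pr\bigl[(u,v)\text{ cut}\bigr]\le \frac{\sum_{i=1}^k\bigl(\Pr[u\in S_i,v\notin S_i]+\Pr[v\in S_i,u\notin S_i]\bigr)}{\sum_{i=1}^k\Pr[u\in S_i\text{ or }v\in S_i]}\le \frac{2D}{(1-\delta)^2}\sum_{i=1}^k\|\bar u_i-\bar v_i\|^2.
\]
Summing over edges gives the desired $O(D)\cdot SDP$ bound.

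Termination is the subtlest part: the naive measure $|A(t)|$ is not monotone because an overflow can push several vertices back into the active set. My plan is (i)~to use the orthogonal-separator bounds together with (\ref{sdp:equal1approx}) and the uniform choice of $i$ to show that each iteration is non-trivial (i.e.\ $S_i(t)\cap A(t)\ne\varnothing$) with probability at least $\Omega(1/(nk))$ whenever $A(t)\ne\varnothing$, and that each non-trivial iteration performs at least one insertion; (ii)~to observe that the $\mu_i$-mass removed in any iteration is bounded by the $\mu_i$-mass of the newly inserted vertices (both at most $(1+\varepsilon)\rho_i$), so kick-outs do not generate unbounded extra mass to process; (iii)~to design an amortized potential that charges each removal against the insertion event that triggered the corresponding overflow, and combine this with (i) to bound $\Exp[T]\le 4n^2 k+1$. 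This amortization --- balancing insertions against the re-insertions that kick-outs create --- is the main technical obstacle; the size and cut bounds are either by construction or a routine adaptation of standard SDP-rounding techniques once the same-layer invariant is in place.
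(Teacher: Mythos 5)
Your first claim (the size bound) and your structural observation (``same layer, together forever'') match the paper exactly, and they are correct. The rest of the proposal, however, misses the single lemma that the paper's proof actually hinges on, and the two arguments you offer in its place both have gaps.

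The paper's entire analysis reduces to bounding $\sum_{t=0}^{\infty}\Pr(u\in A(t))\le 4k/\alpha$ for every vertex $u$. This one estimate simultaneously gives $\Exp[T]\le\sum_{v}\sum_{t}\Pr(v\in A(t))+1\le 4n^2k+1$ and, combined with the third orthogonal-separator property, the cut bound (one just sums over iterations the per-iteration separation probability, which is at most $(\Pr(u\in A(t))+\Pr(v\in A(t)))\cdot\frac{\alpha D}{k}\sum_i\|\bar u_i-\bar v_i\|^2$). The hard part is establishing that bound, and the paper does it through Lemmas~\ref{lem:bound-sum-A} and~\ref{lem:returnProb}: define $f(m)$ to be the worst-case probability that a vertex currently in the top layer of some bin gets kicked back out within $m$ steps, and prove $f(m)\le 1/2$ by induction, where the inductive step uses Markov's inequality together with constraint~(\ref{sdp:sum-u-in-P}) to argue that the expected total weight inserted into a given bin over any time window is at most $2\rho_i/(1-2\delta)<4(1+\varepsilon)\rho_i$. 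That inductive ``self-bounding'' argument is the real content of the theorem, and your proposal does not contain it or anything equivalent.

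Without it, both of your remaining arguments fail. For the cut bound, your ratio
$\frac{\sum_i(\Pr[u\in S_i,v\notin S_i]+\Pr[v\in S_i,u\notin S_i])}{\sum_i\Pr[u\in S_i\text{ or }v\in S_i]}$
bounds the probability of separation at the \emph{first} nontrivial interaction, but after $u,v$ are captured together they may subsequently be kicked out \emph{together} and restart the competition; the true probability of ever being separated is a geometric-series blow-up of your ratio, with ratio controlled exactly by the kick-out probability $f(\cdot)$. If you cannot show that probability is bounded away from~$1$, the ratio bound proves nothing. For termination, your proposed amortized potential that ``charges each removal against the insertion that triggered the overflow'' is a deterministic-style worst-case argument, but there is no deterministic bound here: a single bin can in principle overflow repeatedly, cascading re-activations, and the algorithm terminates only in expectation. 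You need a probabilistic bound, and the natural one is precisely the paper's $f(m)\le 1/2$ induction. Your observation~(ii), that the mass removed per iteration is at most the mass inserted, is true but does not by itself prevent an unbounded sequence of overflow events. In short, the skeleton is right, but the theorem's technical heart --- the inductive bound on the reactivation probability --- is absent, and the substitutes you sketch do not close the gap.
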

As we mentioned earlier, the algorithm always returns a valid partitioning. We need to verify that the algorithm terminates in 
expected polynomial time, and that it produces cuts of cost at most $O(D\times OPT)$ (see also Remark~\ref{rem:polytime}). 

The state of the algorithm at iteration $t$ is determined by the sets $A(t)$, $P_1(t),\dots, P_k(t)$ and the depths of the elements.
We denote the state by $\calC(t) = \{A(t), P_1(t),\dots, P_k(t), \depth(t)\}$. Observe that the probability that 
the algorithm is in the state $\calC^*$ at iteration $(t+1)$ is determined only by the state of the algorithm at 
iteration $t$. It does not depend on $t$ (given $\calC(t)$). So the
states of the algorithm form  
a Markov random chain.
The number of possible states is finite (since the depth of every vertex is bounded by $n$).
To simplify the notation, we assume that for $t\geq T$, $\calC(t)=\calC(T)$. This is consistent
with the definition of the algorithm --- if we did not stop the algorithm at time $T$, it would simply idle,
since $A(t)=\varnothing$, and thus $S_i(t)\cap A(t)=\varnothing$ for $t\geq T$.

We are interested in the probability that an inactive vertex $u$ which lies \textit{in the top layer} of one of 
the bins (i.e., $u\notin A(t)$ and 
$\depth_u(t)=1$) is removed from that bin within $m$ iterations. We let
$$
f(m, u, \calC^*) = 
\Pr(\exists t\in [t_0,t_0+m]\;\text{s.t.}\; u\in A(t)\Given \calC(t_0) = \calC^*, \depth_u(t_0)=1).
$$
That is, $f(m, u, \calC^*)$ is the probability that $u$ is removed from the bin $i$ at one of the iterations $t\in [t_0,t_0+m]$ given that 
at iteration $t_0$ the state of the algorithm is $\calC^*$ and $u$ is in the top layer of the bin $i$.
Note that the probability above does not depend on $t_0$ and thus $f(m, u, \calC^*)$ is well defined. We let
$$f(m) = \max_{u\in V}\max_{\calC^*} f(m , u, \calC^*).$$

Our fist lemma gives a bound on the expected number of steps on which a vertex $u$ is active in 
terms of $f(m)$.
\begin{lemma}\label{lem:bound-sum-A}
For every possible state of the algorithm $\calC^*$, every vertex $u$, and natural number $t_0$,
\begin{equation}\label{eq:sum-active}
\sum_{t=t_0}^{t_0+m} \Pr(u\in A(t)\Given \calC(t_0)=\calC^*)\leq \frac{k}{(1-2\delta)\alpha (1-f(m-1))}.
\end{equation}
\end{lemma}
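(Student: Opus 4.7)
The plan combines two ingredients: a lower bound on the per-step probability that an active vertex is absorbed into some bin, and the bound $f(m-1)$ on the probability that a top-layer vertex is evicted within the remaining time. First, I would verify that for every state $\calC(t)$ with $u \in A(t)$, the probability that $u$ is absorbed at step $t$ is at least
$$
p \;:=\; \frac{(1-\delta)^2\alpha}{k} \;\geq\; \frac{(1-2\delta)\alpha}{k}.
$$
Indeed, the index $i$ is chosen uniformly, so the absorption probability equals $\frac{1}{k}\sum_i \Pr(u\in S_i(t))$; the orthogonal-separator property gives $\Pr(u\in S_i(t))\geq (1-\delta)\alpha\|\bar u_i\|^2$, and the relaxed constraint~(\ref{sdp:equal1approx}) gives $\sum_i\|\bar u_i\|^2\geq 1-\delta$. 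Crucially, when $u$ is absorbed it is placed at depth~$1$ in the chosen bin, which is exactly the starting configuration required by~$f$.

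Next, I would partition the window $[t_0,t_0+m]$ into alternating active and inactive runs of $u$. Let $\sigma_0<\sigma_1<\cdots$ be the consecutive activation times within the window ($\sigma_0=t_0$ if $u\in A(t_0)$, otherwise $\sigma_0$ is the first later activation), and let $L_j$ be the length of the $j$th active run, so that
$$
\sum_{t=t_0}^{t_0+m} \indicator[u\in A(t)] \;\leq\; \sum_{j\geq 0} L_j\,\indicator[\sigma_j \leq t_0+m].
$$
The first estimate shows that, conditionally on the history up to $\sigma_j$, the length $L_j$ is stochastically dominated by a geometric random variable with parameter $p$, so $\E{L_j\given \calC(\sigma_j)}\leq 1/p$. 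At time $\sigma_j+L_j$ the vertex $u$ sits at depth~$1$ in some bin, with at most $m-1$ iterations left in $[t_0,t_0+m]$; hence by the very definition of $f$ (and its monotonicity in $m$), the conditional probability that another activation occurs inside the window is at most $f(m-1)$. Iterating gives $\Pr(\sigma_j\leq t_0+m)\leq f(m-1)^j$.

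Combining these bounds via the tower property yields
$$
\sum_{t=t_0}^{t_0+m}\Pr(u\in A(t)\given \calC(t_0)=\calC^*) \;\leq\; \frac{1}{p}\sum_{j\geq 0} f(m-1)^j \;=\; \frac{k}{(1-2\delta)\alpha(1-f(m-1))},
$$
which is the claim. When $u\notin A(t_0)$, I would simply condition on the first activation time $T_1\geq t_0$: since $t_0+m-T_1\leq m$, the above estimate applies to the remaining window, while the inactive prefix contributes nothing to the sum. The main technical obstacle is the bookkeeping for the Wald-type step: verifying that ``active runs are geometric with parameter $p$'' and ``each subsequent activation happens with conditional probability at most $f(m-1)$'' compose into the claimed product. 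This reduces to a careful use of the Markov property of the chain $\calC(\cdot)$, together with the fact that absorption always places $u$ at depth~$1$, which is what makes $f$ the relevant bound.
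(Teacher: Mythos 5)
Your proof is correct and takes essentially the same route as the paper's: both bound each active run of $u$ by a geometric random variable with success probability $p = (1-2\delta)\alpha/k$ (coming from the orthogonal-separator lower bound plus constraint~(\ref{sdp:equal1approx})), observe that absorption deposits $u$ at depth~$1$ so that the definition of $f(m-1)$ governs the chance of a subsequent reactivation within the remaining window, and sum the resulting geometric series $\sum_j f(m-1)^j$ to get $\tfrac{k}{(1-2\delta)\alpha(1-f(m-1))}$. Your write-up merely makes the paper's implicit decomposition into activation epochs $\sigma_0<\sigma_1<\cdots$ explicit and is, if anything, a touch more careful with the tower-property bookkeeping.
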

\begin{proof}
The left hand side of inequality~(\ref{eq:sum-active}) equals expected number (conditioned on $\calC(t_0)=\calC^*$)
of iterations $t$ in the interval $[t_0,t_0+m]$ at which $u$ is active i.e., $u\in A(t)$. Our goal is to
upper bound this quantity.

Initially, at time $t_0$, $u$ is active or inactive. At every time $t$ when $u$ is active, $u$ 
is thrown in one of the bins $P_i$ with probability at least (here, we use that the SDP solution 
satisfies constraint (\ref{sdp:equal1approx}))
$$\frac{1}{k}\sum_{i=1}^k (1-\delta) \alpha \|\bar u_i\|^2 \geq \frac{(1-2\delta)\alpha}{k}.$$
So the expected number of iterations passed since $u$ becomes active till $u$ is stored in one of the bins and thus becomes inactive 
is at most $k/((1-2\delta)\alpha)$. 

Suppose that $u$ is stored in a bin $i$ at iteration $t$, then $u\in P_i(t+1)$ and $\depth_u(t+1)=1$. Thus, the probability that $u$ is reactivated  till 
iteration $t_0+m$ i.e., the  probability that for some $\tau \in [(t+1), t_0+m]\subset [(t+1), (t+1) + (m-1)]$, $u\in A(\tau)$ is at most $f(m-1)$.
Consequently, the expected number of iterations $t\in [t_0,t_0+m]$ at which $u$ is active is bounded by 
$$\frac{k\cdot 1}{(1-2\delta)\alpha} + \frac{k\cdot f(m)}{(1-2\delta)\alpha}  + 
\frac{k\cdot f^2(m)}{(1-2\delta)\alpha}  + \cdots = \frac{k}{(1-2\delta)\alpha (1-f(m))}.$$
\end{proof}

We now show that $f(m)\leq 1/2$ for all $m$.
\begin{lemma}\label{lem:returnProb} For all natural $m$, $f(m) \leq 1/2$.
\end{lemma}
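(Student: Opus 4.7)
The plan is to track, for $u$ sitting in bin $i$ at depth $\depth_u(t)$, the potential
\[W_u(t):=\mu_i\bigl(\{v\in P_i(t):\depth_v(t)\leq\depth_u(t)\}\bigr),\]
and to show that $u$ can be expelled from bin $i$ only once $W_u$ crowds against the hard cap $5(1+\varepsilon)\rho_i$. I would first verify three structural facts. At time $t_0$ the depth-$1$ layer of bin $i$ is exactly the set $S_i(\tau)\cap A(\tau)$ added at the most recent bin-$i$ insertion iteration $\tau$, so $W_u(t_0)\leq \mu_i(S_i(\tau))\leq(1+\varepsilon)\rho_i$ by the first property of orthogonal separators. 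While $u$ remains in bin $i$, $W_u$ is non-decreasing, since Step~3 of the algorithm removes only vertices of old depth strictly greater than the chosen threshold $d$, and $u$ surviving forces $d\geq\depth_u(t)$, so nothing at depth $\leq\depth_u(t)$ is ever trimmed. Moreover, whenever bin $i$ is picked at iteration $t+1$ and $u$ is kept, $W_u(t+1)=W_u(t)+\mu_i(S_i(t+1)\cap A(t+1))$, so $W_u$ accumulates exactly the new weight added to bin $i$.

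Reading Step~3 off directly gives the kicking condition: $u$ is expelled at iteration $t+1$ iff bin $i$ is picked and $W_u(t)+\mu_i(S_i(t+1)\cap A(t+1))>5(1+\varepsilon)\rho_i$. Combined with $W_u(t_0)\leq(1+\varepsilon)\rho_i$, the event that $u$ is kicked at some $\tau\in[t_0+1,t_0+m]$ forces the cumulative weight added to bin $i$ during $[t_0+1,\tau]$ to exceed $4(1+\varepsilon)\rho_i$. I would then estimate this expected cumulative weight: by the orthogonal separator marginal bound $\Pr(v\in S_i(t))\leq\alpha\|\bar v_i\|^2$ and constraint~(\ref{sdp:sum-u-in-P}),
\[\E{\mu_i(S_i(t)\cap A(t))\Given\text{bin $i$ picked at $t$}}\leq\alpha\sum_v\mu_i(v)\|\bar v_i\|^2\leq\alpha\rho_i=\rho_i/n.\]
Multiplying by the $1/k$ probability of picking bin $i$ at each step, the expected total weight added over $m$ iterations is at most $m\rho_i/(kn)$, and Markov's inequality gives $f(m)\leq m/\bigl(4(1+\varepsilon)kn\bigr)$.

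The main obstacle is promoting this to a bound uniform in $m$, since the Markov estimate alone only delivers $1/2$ for $m=O(kn)$. To close the gap I expect one either invokes a Chernoff-type concentration on the cumulative added weight, exploiting that each increment is bounded by $(1+\varepsilon)\rho_i$ while having mean only $\rho_i/(kn)$, or employs a renewal argument using the Markov property: after each putative kick the process restarts from a state again satisfying $W_u\leq(1+\varepsilon)\rho_i$, so the probability of a further kick in the remaining time is again controlled by the same Markov estimate, and summing the resulting geometric series yields a bound independent of $m$.
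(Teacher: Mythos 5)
Your setup matches the paper: the potential $W_u$, the observation that the top layer at $t_0$ has weight at most $(1+\varepsilon)\rho_i$, and the conclusion that expelling $u$ within $m$ steps forces the cumulative weight added to bin $i$ during $[t_0,t_0+m-1]$ to exceed $4(1+\varepsilon)\rho_i$ --- all of this is exactly what the paper does, and the plan of finishing via Markov's inequality is also correct. The problem is the step in between: the bound you derive on the expected added weight grows linearly in $m$, and neither of your two proposed fixes closes this gap.

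Concretely, writing $X_{M,i^*}=\sum_{t\in M}\mathbf{1}[i(t)=i^*]\,\mu_{i^*}(S_{i^*}(t)\cap A(t))$, the quantity you must control is
\[
\Exp[X_{M,i^*}] \;=\; \sum_{v\in V}\mu_{i^*}(v)\,\sum_{t\in M}\Pr\bigl(i(t)=i^*,\,v\in S_{i^*}(t),\,v\in A(t)\bigr).
\]
You bound each summand by $\alpha\|\bar v_{i^*}\|^2/k$, effectively discarding the factor $\Pr(v\in A(t))$. That throws away the whole game: once $v$ has been binned it mostly stops contributing. The paper instead factors $\Pr(i(t)=i^*,\,v\in S_{i^*}(t),\,v\in A(t))=\Pr(i(t)=i^*,\,v\in S_{i^*}(t))\cdot\Pr(v\in A(t)\mid\calC(t_0)=\calC^*)$ (the separator draw at time $t$ is independent of the history) and then invokes a separate renewal bound --- Lemma~\ref{lem:bound-sum-A} --- which says $\sum_{t\in M}\Pr(v\in A(t)\mid\calC(t_0)=\calC^*)\le k/\bigl((1-2\delta)\alpha(1-f(m-1))\bigr)$ \emph{uniformly in $m$}. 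Crucially, this bound itself depends on $f(m-1)$, so the whole proof is an induction on $m$: assuming $f(m-1)\le 1/2$ gives $\Exp[X_{M,i^*}]\le 2\rho_{i^*}/(1-2\delta)$, and Markov then gives $f(m)\le 1/2$.

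Your two patches do not produce this. A Chernoff bound cannot help because the difficulty is the mean growing like $m\rho_i/(kn)$, not fluctuations around a small mean. Your renewal sketch is aimed at the wrong renewal process: you propose renewing on successive kicks of $u$, but if the per-segment kick probability is only known to be $\le m_j/(4(1+\varepsilon)kn)$ for a segment of length $m_j$, then summing over segments just gives back $m/(4(1+\varepsilon)kn)$ --- no improvement. The paper's renewal is on the \emph{activity periods of each individual vertex $v$}: each activity period has expected length at most $k/((1-2\delta)\alpha)$ (by property~2 of orthogonal separators and constraint~(\ref{sdp:equal1approx})), and $v$ re-enters $A$ after being binned with probability at most $f(m-1)$, giving a geometric series that converges by the inductive hypothesis. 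This coupling between the renewal bound for $v$ and the quantity $f$ being estimated for $u$ is the idea your proposal is missing.
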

\begin{proof}
We prove this lemma by induction on $m$. For $m=0$, the statement is trivial as $f(0)=0$. 

Consider an arbitrary state $\calC^*$, bin $i^*$, vertex $u$, and iteration $t_0$. Suppose that $\calC(t_0)=\calC^*$, $u\in P_{i^*}(t_0)$ and $\depth_u(t_0)=1$ i.e., $u$ lies in the top
layer in the bin $i^*$. We need to estimate the probability that $u$ is removed from the bin $i^*$ till iteration $t_0+m$. The vertex $u$ is removed 
from the bin $i^*$ if and only if at some iteration $t\in \{t_0, \dots ,t_0+m-1\}$, $u$ is ``pushed away'' from the bin  by new vertices 
(see Step 2 of the algorithm). This happens only if the weight of vertices added to the bin $i^*$ at iterations 
$\{t_0, \dots ,t_0+m-1\}$ plus the weight of vertices in the first layer of the bin at iteration $t_0$ exceeds $5(1+\varepsilon)\rho_i$. 
Since the weight of vertices in the first layer is at most $(1+\varepsilon) \rho_i$, the weight of vertices added to the bin $i^*$ at iterations $\{t_0, \dots ,t_0+m-1\}$ must be greater than $4(1+\varepsilon)\rho_{i^*}$.

We compute the expected weight of vertices thrown in the bin $i^*$ at iterations $t\in \{t_0, \dots ,t_0+m-1\}$. 
Let us introduce some notation: $M=\{t_0, \dots ,t_0+m-1\}$; $i(t)$ is the index $i$ chosen by the algorithm at the iteration $t$. Let $X_{M, i^*}$ be the 
weight of vertices thrown in the bin $i^*$ at iterations $t\in M$. Then,
\begin{align}
\Exp \big[X_{M, i^*}\given \calC(t_0)=\calC^*\big] &= \label{eq:sum-bin-i-vert} \Exp\Big[\sum_{\substack{t\in M\\s.t.\;i(t)=i^*}} \mu_{i^*}\big(S_{i^*}(t)\cap A(t)\big)
\given \calC(t_0)=\calC^*\Big]\\
&= \sum_{t\in M}\sum_{v\in V}\Pr\big(i(t)=i^* \text{ and } v\in S_{i^*}(t)\cap A(t)\Given \calC(t_0)=\calC^*\big) \mu_{i^*}(v).
\notag
\end{align}
The event ``$i(t)=i^*$ and $v\in S_{i^*}(t)$'' is independent from the event ``$v\in A(t)$ and $\calC(t_0)=\calC^*$''.
Thus,
\begin{multline*}
\Pr\big(i(t)=i^*\text{ and } v\in S_{i^*}(t)\cap A(t)\Given \calC(t_0)=\calC^*\big)\\
 = \Pr\big(i(t)=i^*\text{ and } v\in S_{i^*}(t)\big)\cdot \Pr\big(v\in A(t)\Given \calC(t_0)=\calC^*\big).
\end{multline*}
Since $i(t)$ is chosen uniformly at random in $\{1,\dots,k\}$, we have $\Pr(i(t)=i^*)=1/k$. Then, by 
property 2 of orthogonal separators, $\Pr(v\in S_{i^*}(t)\Given i(t)=i^*) \leq \alpha \|\bar v_{i^*}\|^2$.
We get
\[
\Pr\big(i(t)=i^*\text{ and } v\in S_{i^*}(t)\cap A(t)\Given \calC(t_0)=\calC^*\big) 
\leq \frac{\alpha \|\bar v_{i^*}\|^2}{k}\cdot \Pr\big(v\in A(t)\Given \calC(t_0)=\calC^*\big).
\]
We now plug this expression in~(\ref{eq:sum-bin-i-vert}) and use Lemma~\ref{lem:bound-sum-A},
\begin{align*}
\Exp [X_{M, i^*}\given \calC(t_0)=\calC^*]&\leq 
\sum_{v\in V} \frac{\alpha \|\bar v_{i^*}\|^2\mu_{i^*}(v)}{k}\cdot \sum_{t\in M} \Pr\big(v\in A(t)\Given \calC(t_0)=\calC^*\big)\\
&\leq 
\sum_{v\in V} \frac{\alpha \|\bar v_{i^*}\|^2\mu_{i^*}(v)}{k}\cdot \frac{k}{(1-2\delta)\alpha (1-f(m-1))}\\
&= \sum_{v\in V} \frac{\|\bar v_{i^*}\|^2\mu_{i^*}(v)}{(1-2\delta)(1-f(m-1))}.
\end{align*}
Finally, observe that $1-f(m-1)\geq 1/2$ by the inductive hypothesis, and 
$\sum_{v\in V} \|\bar v_{i^*}\|^2\mu_{i^*}(v) \leq \rho_{i^*}$ by the SDP 
constraint~(\ref{sdp:sum-u-in-P}).
Hence, $\Exp [X_{M, i^*}\given \calC(t_0)=\calC^*] \leq 2\rho_{i^*}/(1-2\delta)$. By Markov's inequality,
$$\Pr\big(X_{M, i^*} \geq 4(1+\varepsilon) \rho_{i^*} \big)\leq 
\frac{2\rho_{i^*}}{4(1-2\delta)(1+\varepsilon) \rho_{i^*}}\leq \frac{1}{2},$$
since $\delta = \varepsilon/4$. This concludes the proof.
\end{proof}

As an immediate corollary of Lemmas~\ref{lem:bound-sum-A} and~\ref{lem:returnProb}, we get that for all $u\in V$,
\begin{equation}\label{eq:expected-active-iter}
\sum_{t=0}^{\infty} \Pr(u\in A(t))= 
\lim_{m\to \infty}\sum_{t=0}^{m} \Pr(u\in A(t))\leq \frac{2k}{(1-2\delta)\alpha}\leq \frac{4k}{\alpha}.
\end{equation}
\begin{proof}[Proof of Theorem~\ref{thm:main}]
We now prove Theorem~\ref{thm:main}. We first bound the expected running time.
At every iteration of the algorithm $t<T$, the set $A(t)$ is not empty. Hence,
using~(\ref{eq:expected-active-iter}), we get
$$\Exp [T]\leq \Exp\Big[\sum_{t=0}^{\infty}|A(t)|\Big] + 1= \sum_{v\in V}\sum_{t=0}^{\infty}\Pr(v\in A(t)) + 1
\leq n\cdot \frac{4k}{\alpha} + 1 = 4kn^2 + 1.$$

We now upper bound the expected size of the cut. For every edge $(u,v)\in E$ we estimate the probability that $(u,v)$ is cut. Suppose that $(u,v)$ is cut. Then, $u$ and $v$ belong to distinct sets $P_i(T)$. Consider
the iteration $t$ at which $u$ and $v$ are separated the first time. A priori, 
there are two possible cases:
\begin{enumerate}
\item At iteration $t$, $u$ and $v$ are active, but only one of the vertices $u$ or $v$ is added to some 
set $P_i(t+1)$; the other vertex remains in the set $A(t+1)$.
\item At iteration $t$, $u$ and $v$ are in some set $P_i(t)$, but only one of the vertices $u$ or $v$ is
removed from the set $P_i(t+1)$.
\end{enumerate}
It is easy to see that, in fact, the second case is not possible, since if $u$ and $v$ were never
separated before iteration $t$, then $u$ and $v$ must have the same depth (i.e., $\depth_u(t) = \depth_v(t)$)
and thus $u$ and $v$ may be removed from the bin $i$ only together. 

Consider the first case, and assume that $u\in P_{i(t)}(t+1)$ and $v\in A(t+1)$. Here,
as in the proof of Lemma~\ref{lem:returnProb}, we denote the index $i$ chosen at iteration $t$ by $i(t)$.
Since $u\in P_{i(t)}(t+1)$ and $v\in A(t+1)$, we have $u\in S_{i(t)}(t)$ and $v\notin S_{i(t)}(t)$.
Write
\begin{align*}
\Pr(u,v\in A(t);&\;u\in S_{i(t)}(t);\;v\notin S_{i(t)}(t))=\\&=
\Pr(u,v\in A(t))\cdot\Pr(u\in S_{i(t)}(t);\;v\notin S_{i(t)}(t))\\
&= \Pr(u,v\in A(t)) \cdot \sum_{i=1}^k \frac{\Pr(u\in S_i(t);\;v\notin S_i(t)\given i(t)=i)}{k}.
\end{align*}
We replace $\Pr(u,v\in A(t))$ with $\Pr(u\in A(t))\geq \Pr(u,v\in A(t))$, and then 
use the inequality $\Pr (u\in S_i(t);\;v\notin S_i(t))\leq \alpha D\;\|\bar u_i -\bar v_i\|^2$,
which follows from the third property of orthogonal separators. We get
\[
\Pr(u,v\in A(t);\;u\in S_{i(t)}(t);\;v\notin S_{i(t)}(t)) 
\leq
\Pr(u\in A(t))\times \Big(\frac{1}{k}\sum_{i=1}^k \alpha D\;\|\bar u_i -\bar v_i\|^2\Big).
\]
Thus, the probability that $u$ and $v$ are separated at iteration $t$ is upper bounded by 
$\Big(\Pr(u\in A(t)) + \Pr(v\in A(t))\Big)\times \Big(\frac{1}{k}\sum_{i=1}^k \alpha D\;\|\bar u_i -\bar v_i\|^2\Big).$
The probability that the edge $(u,v)$ is cut (at some iteration) is at most
\begin{multline*}
\Big(\sum_{t=0}^{\infty} \Pr(u\in A(t)) + \Pr(v\in A(t))\Big)\times \Big(\frac{1}{k}\sum_{i=1}^k \alpha D\;\|\bar u_i -\bar v_i\|^2\Big)\leq \\
\leq \frac{8k}{\alpha} \Big(\frac{1}{k}\sum_{i=1}^k \alpha D\;\|\bar u_i -\bar v_i\|^2\Big)
=8 \sum_{i=1}^k D\;\|\bar u_i -\bar v_i\|^2.
\end{multline*}
To bound the first term on the left hand side we used inequality~(\ref{eq:expected-active-iter}). 
We get the desired bound on the expected number of cut edges:
$$\sum_{(u,v)\in E}\Pr ((u,v)\text{ is cut})\leq 8 \sum_{(u,v)\in E} \sum_{i=1}^k D\;\|\bar u_i -\bar v_i\|^2
= 16 D \cdot SDP,$$
where $SDP$ is the SDP value.
\end{proof}

\appendix
\section{Orthogonal Separators}
For completeness, we prove Theorem~\ref{thm:orth-sep}.

\begin{theorem}[\citet*{BFK}]\label{thm:orth-sep}
There exists a polynomial-time algorithm that given a graph $G=(V,E)$, a measure $\mu$ on $V$ ($\mu(V)=1$), 
parameters $\rho, \varepsilon, \delta \in (0,1)$ and a collection of vectors $\bar{u}$
satisfying the following constraints:
\begin{align}
\sum_{u\in V} \|\bar u\|^2 \mu(u) &\leq \rho&
\text{for all } i\in[k]\\
\sum_{v\in V} \langle \bar u, \bar v \rangle \mu(v)&\leq \|\bar u\|^2 \rho&
\text{for all } u\in V,\,i\in[k]\label{A-sdp:spread-constr}\\
\|\bar u-\bar v\|^2 + \|\bar v-\bar w\|^2
&\geq \|\bar u-\bar w\|^2&\text{for all } u,v,w\in V,\;i\in[k]\\
0\leq \langle \bar u, \bar v\rangle &\leq \|\bar u\|^2&\text{for all } u,v\in V,\;i\in[k]\label{A-sdp:triang-ineq}\\
\|\bar u \|^2 &\leq 1& \text{for all } u\in V,\,i\in[k]
\end{align}
outputs a random set $S\subset V$ (``orthogonal separator'') such that
\begin{enumerate}
\item $\mu(S)\leq (1+\varepsilon)\rho$ (always);
\item For all $u$, 
$\Pr(u\in S)\in [(1-\delta) \alpha \|\bar u\|^2, \alpha \|\bar u\|^2]$;
\item For all $(u,v)\in E$, 
$\Pr(u\in S, v\notin S) \leq \alpha D \cdot \|\bar u-\bar v\|^2$.
\end{enumerate}
Where the probability scale $\alpha = 1/n$, and the distortion $D\leq O_{\varepsilon}(\sqrt{\log n \log (1/(\rho\delta))})$. For graphs with 
excluded minors, $D=O_{\varepsilon}(1)$.
\end{theorem}

In \citet*{CMM}, we showed that there exists a randomized polynomial-time algorithm that outputs a random set $S$ 
with the following properties (see also~\citet*{BFK} and ~\citet*{LM14}):
\begin{itemize}
\item For all $u\in V$, $\Pr (u\in S) = \alpha \, \| \bar u\|^2$.
\item For all $u,v\in V$ with  $\|\bar u - \bar v\|^2\geq \beta
\min (\| \bar u\|^2, \|\bar v\|^2)$,
$$\Pr (u \in S \text{ and } v \in S) \leq \frac{\alpha \min(\|\bar u \|^2, \|\bar v \|^2)}{m}.$$
\item For all $(u,v)\in E$,
$$\Pr(u\in S\text{ and } v\notin S)\leq \alpha  D \times \|\bar u - \bar v\|^2.$$
\end{itemize}
Here $m > 0$ is a parameter of the algorithm; $\alpha = 1/n$ is a probability scale; 
$D\leq O_{\beta}(\sqrt{\log n \log m})$ is the distortion.
\citet*{BFK} showed that for graphs with excluded minors, $D=O(1)$.

Our algorithm samples $S$ as above (with $m=2/(\delta\varepsilon\rho)$, $\beta = \varepsilon/4$) and outputs $S'=S$ if $\mu(S)\leq (1 +\varepsilon)\rho$,
and $S'=\varnothing$, otherwise.
It is clear that $\mu(S')\leq (1 + \varepsilon)\rho$ (always), and thus the first property in Theorem~\ref{thm:orth-sep} is satisfied. Then, for $(u,v)\in E$,
$$\Pr(u\in S'\text{ and } v\notin S')\leq \Pr(u\in S\text{ and } v\notin S) \leq \alpha  D \times \|\bar u - \bar v\|^2,$$
where $D=O_{\beta}(\sqrt{\log n \log m}) = O_{\varepsilon}(\sqrt{\log n \log (1/(\rho\delta))})$.

For every $u\in V$,
$$\Pr(u\in S')\leq \Pr(u\in S)= \alpha \|\bar u \|^2.$$
So we only need to verify that $\Pr(u\in S')\geq \alpha (1-\delta) \|\bar u \|^2$. We assume $\|\bar{u}\|^2 \neq 0$. We have 
$$\Pr(u\in S') = \Pr(u\in S'\Given u\in S)\cdot \Pr (u\in S) = \Pr\big(\mu(S)\leq (1 + \varepsilon)\rho \Given u\in S\big)\cdot \alpha \|\bar u \|^2.$$
We split $V$ into two sets $A_u = \{v: \|\bar u - \bar v\|^2 \geq \beta \|\bar u\|^2\}$ and 
$B_u = \{v: \|\bar u - \bar v\|^2 < \beta \|\bar u\|^2\}$. We show below (see Lemma~\ref{lem:sizeA}) that $\mu(B_u)\leq (1 + \varepsilon/2)\rho$. 
Then, 
$$\mu(S)=\mu(S\cap A_u)+\mu(S\cap B_u)\leq (1+\varepsilon /2)\rho + \mu (S\cap A_u)$$
and 
\begin{equation}\label{eq:bound-u-in-S}
\Pr(u\in S') \geq
\alpha \|\bar u \|^2\cdot\Pr \big(\mu(S\cap A_u)\leq \varepsilon\rho/2\Given u\in S\big).
\end{equation}
We estimate $\Pr \big(\mu(S\cap A_u)\geq \varepsilon \rho /2\given u\in S)$. For every $v\in A_u$, $\|\bar u - \bar v\|^2 \geq \beta \|\bar u\|^2$.
Thus, for $v\in A_u$, $\Pr (u \in S; v\in S)\leq \alpha \|\bar u \|^2/m$, and 
$$\Pr (v\in S\Given u \in S) = \frac{\Pr(u \in S, v \in S)}{\Pr(u\in S)}\leq \frac{1}{m}.$$
Therefore, $\Exp[\mu(S\cap A_u)\given u\in S]\leq \mu(A_u)/m\leq 1/m$, and, by Markov's inequality,
$$\Pr\big(\mu(S\cap A_u)\geq \varepsilon\rho/2 \Given u\in S\big)\leq 
\frac{\Exp[\mu(S)\given u\in S]}{\varepsilon\rho/2}\leq \frac{2}{m\varepsilon\rho}\leq \delta.$$
We plug this bound in~(\ref{eq:bound-u-in-S}) and get the desired bound,
$$\Pr(u\in S')\geq \alpha \|\bar u \|^2\cdot\Pr \big(\mu(S\cap A_u)\leq \varepsilon \rho/2\Given u\in S\big)
\geq \alpha \|\bar u \|^2\cdot (1-\delta).$$

We now prove Lemma~\ref{lem:sizeA}.
\begin{lemma}\label{lem:sizeA} For every $u\in S$ with $\|\bar u\|^2\neq 0$,
$\mu(B_u)\leq (1+\varepsilon/2)\rho$.
\end{lemma}
\begin{proof}
If $v\in B_u$, then by the definition of $B_u$ and by inequality~(\ref{A-sdp:triang-ineq}), we have
$$\|\bar u\|^2 - \langle \bar u, \bar v \rangle =
\|\bar u - \bar v\|^2 - (\|\bar v\|^2 - \langle \bar u, \bar v \rangle) \leq 
\beta \|\bar u\|^2.$$
Thus, $\langle \bar u, \bar v \rangle\geq (1-\beta) \|\bar u\|^2$. Now, we use constraint~(\ref{A-sdp:spread-constr}),
\begin{align*}
\mu(B_u)&= \sum_{v\in B_u}\mu (v) \leq \sum_{v\in B_u} \mu(v) \cdot \frac{\langle \bar u, \bar v \rangle}{(1-\beta)\|\bar u\|^2}
\leq \frac{1}{(1-\beta)\|\bar u\|^2} \sum_{v\in V}  \langle \bar u, \bar v \rangle \mu(v)\\
&\leq \frac{(1+2\beta)}{\|\bar u\|^2}\cdot \rho \|\bar u \|^2 = (1+2\beta)\rho = (1+\varepsilon/2)\rho.
\end{align*}
\end{proof}

\section{$O(\sqrt{\log n \log k})$ approximation}
\label{sec:logk}
\begin{theorem}
There is a polynomial-time randomized algorithm that returns a partitioning $P_1(T),\dots, P_k(T)$ satisfying $\mu_i(P_i)\leq 5(1+\varepsilon)\rho_i$
such that the expected 
number of cut edges is at most $O(D\times OPT)$, where $D=O_{\varepsilon}(\sqrt{\log n \log k})$.
\end{theorem}
\begin{proof}
We perform three steps. First we solve the SDP relaxation, then transform its solution and change measures $\mu_i$, and finally apply Theorem~\ref{thm:main}
to the obtained SDP solution. 

We start with describing how we transform the solution.
We set $\delta = \varepsilon/4$ as before. Then we choose a threshold $\theta$ uniformly at random from 
$[\delta/2,\delta]$. We let $\tilde u_i = \bar u_i$ if $\|\bar u_i\|^2 \geq \theta / k$ and 
$\tilde u_i = 0$, otherwise. It is immediate that the solution $\tilde u_i$ satisfies all SDP constraints except possibly constraint~(\ref{sdp:equal1}). Note, however, that it satisfies constraint (\ref{sdp:equal1approx}):
$$\sum_{i=1}^k \|\tilde u_i\|^2 = 
\sum_{i=1}^k \|\bar u_i\|^2 - \sum_{i: \|\tilde u_i\|^2 < \theta/k} \|\tilde u_i\|^2 
= 1 -   \sum_{i: \|\tilde u_i\|^2 < \theta/k} \|\tilde u_i\|^2 \in [ 1 - \delta, 1].
$$
Consider two vertices $u$ and $v$. Assume without loss of generality that $\|\bar u_i\|^2 \leq \|\bar v_i\|^2$.
If either  
$\|\bar u_i\|^2 \leq \|\bar v_i\|^2 < \theta / k$ or 
$\theta / k\leq \|\bar u_i\|^2 \leq \|\bar v_i\|^2$,
then we have $\|\tilde u_i - \tilde v_i\| = \|\bar u_i - \bar v_i\|$. Otherwise, 
if $\|\bar u_i\|^2 < \theta/k \leq \|\bar v_i\|^2$, we have 
$$\|\tilde u_i - \tilde v_i\| = \|\bar v_i\|^2 \leq  \|\bar u_i - \bar v_i\|^2  + \|\bar u_i\|^2 = 
\|\bar u_i - \bar v_i\|^2  + \delta/k.$$
Therefore,
$$
\Exp[\|\tilde u_i - \tilde v_i\|^2] \leq \|\bar u_i - \bar v_i\|^2 + (\delta/k)\Prob{\|\bar u_i\|^2 < \theta/k \leq \|\bar v_i\|^2}. $$
To upper bound $\Prob{\|\bar u_i\|^2 < \theta/k \leq \|\bar v_i\|^2}$, note that the random variable $\theta$ is distributed uniformly on $(\delta/2, \delta)$, so its probability density is bounded from above by $2/\delta$. We get from SDP constraint (\ref{sdp:tr-ineq-2}) that
$\|\bar v_i\|^2 - \|\bar u_i\|^2 \leq \|\bar u_i - \bar v_i\|^2$.
Thus, 
$$\Prob{\|\bar u_i\|^2 < \theta/k \leq \|\bar v_i\|^2}
\leq  (2k/\delta)\cdot \|\bar u_i - \bar v_i\|^2,$$
We have,
$$
\Exp[\|\tilde u_i - \tilde v_i\|^2]
\leq  \|\bar u_i - \bar v_i\|^2 + (\delta/k) \cdot (2k/\delta)\cdot  \|\bar u_i- \bar v_i\|^2 = 3 \|\bar u_i - \bar v_i\|^2.
$$
We conclude that the SDP value of solution $\tilde u_i$ is at most $3 SDP \leq 3 OPT$ in expectation.

Now we modify measures $\mu_i$ and capacities $c_i$.
Let $A_i = \{u:\bar u_i \neq 0\}$. 
Define 
\begin{align*}
\mu'_i(Z) &= \mu_i(Z\cap A_i)/\mu_i(A_i) \text{ for } Z\subseteq V,\\ 
\tilde \rho_i &= \rho_i / \mu_i(A_i)
\end{align*}
(if $\mu_i(A_i) = 0$ we let $\tilde\mu_i = \mu_i$ and $\tilde\rho_i =1$, essentially removing the capacity constraint for $P_i$).
We have $\tilde\mu_i(V) = \mu_i(A_i)/\mu_i(A_i) = 1$.
By (\ref{sdp:sum-u-in-P}), we get
$$
\rho_i \geq \sum_{u\in V} \|\bar u_i\|^2 \mu_i(u) \geq 
\sum_{u\in A_i} \|\bar u_i\|^2 \mu_i(u) \geq 
\sum_{u\in A_i} \frac{\delta}{2k} \cdot \mu_i(u) = \frac{\delta\mu_i(A)}{2k}.$$
Therefore, $\tilde\rho_i = \rho_i / \mu_i (A) \geq \delta/(2k)$,
and $\tilde\rho_{min}= \min \tilde\rho_i \geq \delta/(2k)$
(if $\mu_i(A_i) = 0$ then $\tilde\rho_i = 1 > \delta/(2k)$).

Note that since each $\rho_i$ increases by a factor of $1/\mu_i(A_i)$ 
and each $\mu_i(u)$ increases by a factor at most $1/\mu_i(A_i)$,
vectors $\tilde u_i$ satisfy SDP constraints (\ref{sdp:sum-u-in-P}) and (\ref{sdp:spread}), in which
$\mu_i$ and $\rho_i$ are replaced with $\tilde\mu_i$ and $\tilde\rho_i$, respectively
(assuming that $\mu_i(A_i) \neq 0$; if $\mu_i(A_i) = 0$, the constraints clearly hold).
We run the algorithm from Theorem~\ref{thm:main} on vectors
$\tilde u_i$ with measures $\tilde\mu_i$ and capacities $\rho_i$.
The algorithm finds a partition $P_1,\dots, P_k$ that cuts at most $D \cdot SDP \leq D\cdot OPT$ edges, where
$D = O_{\varepsilon}(\sqrt{\log n\log (1/\tilde\rho_{min})}) =  
O_{\varepsilon}(\sqrt{\log n\log k})$. We verify that the weight
of each set $P_i$ is $O(\rho_i)$. Note that $P_i \subset A_i$
since for $u\notin A_i$, $\|\tilde u_i\|^2 = 0$, and thus the
algorithm does not add $u$ to $P_i$. We have,
$$\mu_i(P_i) = \mu_i'(P_i\cap A_i)\cdot \mu_i(A_i) = \mu_i'(P_i)\cdot \mu_i(A_i) \leq 5(1+\varepsilon)\tilde\rho_i \cdot \mu_i(A_i) \leq  5(1+\varepsilon)\rho_i.$$
\end{proof}

\section{Partitioning with $d$-Dimensional Weights}
\label{sec:multiple-resources}
We describe how Minimum Nonuniform Graph Partitioning with unrelated $d$-dimensional weights reduces to  Minimum Nonuniform Graph Partitioning with unrelated weights.
Consider an instance $\calI$ of Minimum Nonuniform Graph Partitioning with unrelated $d$-dimensional weights.
Let $\mu'_i (u) = \max_j (r_j(u,i)/c_j(i))$. Then define measures  $\mu_i(u)$ and capacities $\rho_i(u)$ by
$$\mu_i(u) = \mu'_i(u)/\mu'_i(V) \quad\text{and}\quad  \rho_i = d/\mu'_i(V).$$
We obtain an instance $\calI'$.
Note that the optimal solution $P^*_1, \dots, P^*_k$ for  $\calI$ is a feasible solution for $\calI'$ since
\begin{align*}
\mu_i(P^*_i) &= \sum_{u\in P^*_i} \frac{\mu'_i(u)}{\mu'_i(V)} = \frac1{\mu'_i(V)}\sum_{u\in P^*_i} \max_j \frac{r_j(u,i)}{c_j(i)}
   \leq  \frac1{\mu'_i(V)} \sum_{u\in P^*_i} \sum_{j=1}^d \frac{r_j(u,i)}{c_j(i)}\\
     &=   \frac1{\mu'_i(V)} \sum_{j=1}^d \sum_{u\in P^*_i}\frac{r_j(u,i)}{c_j(i)}
     \leq \frac{d}{\mu'_i(V)} = \rho_i.
\end{align*}

We solve instance $\calI'$ and get a partitioning $P_1, \dots, P_k$
that cuts at most $O(\sqrt{\log n \log k}\, OPT)$ edges.
The partitioning satisfies $d$-dimensional capacity constraints:
\[
\sum_{u\in P_i} r_j(u,i) \leq \sum_{u\in P_i} c_j(i) \mu'_i(u)
= c_j(i) \mu'_i(V)  \sum_{u\in P_i} \mu_i(u) \leq \\ \leq 
c_j(i) \mu'_i(V) (5 (1+\varepsilon) \rho_i) = 5 d (1+\varepsilon)
\, c_j(i).
\]

This concludes the analysis of the reduction.

\begin{remark}\label{rem:polytime}
The algorithm $\calA$ from Theorem~\ref{thm:main} is a randomized algorithm: it always finds a feasible solution 
(a solution with $|P_i| \leq 5(1+\varepsilon) \rho_i$), 
the expected cost of the solution is  $\alpha_\calA SDP = O(D \times OPT)$ (where $\alpha_\calA = O(D)$), and the expected number of iterations the algorithm performs
is upper bounded by $4n^2k +1$. The algorithm can be easily converted to an algorithm $\calA'$ that always runs in polynomial-time and
that succeeds with high probability. If it succeeds, it outputs a feasible solution of cost $O(D\times OPT)$; if it fails,
it outputs $\perp$ ($\perp$ is a special symbol that indicates that the algorithm failed).
The algorithm $\calA'$ works as follows. It executes $\cal A$. If $\calA$ does not stop after $(4n^4 k + n^2)$ iterations, $\calA'$ terminates and outputs $\perp$.
Otherwise, it compares the value of the solution that $\calA$ found with $3\alpha_\calA \, SDP$: If the cost is less than $3\alpha_\calA \, SDP$, the algorithm 
outputs the solution; otherwise it outputs $\perp$. Clearly the algorithm always runs in polynomial time, and if it succeeds
it finds a solution of cost at most $3\alpha_\calA\, OPT = O(D\times OPT)$. By Markov's inequality,
the probability that the algorithm fails is at most $1/n^2 + 1/3 < 1/2$. By running the algorithm $n$ times, we can make the failure
probability exponentially small (note that we need the algorithm to succeed at least once).
\end{remark}

\end{document}